\DeclareMathOperator{\diag}{diag}
\theoremstyle{plain}
\newtheorem{theorem}{Theorem}
\newtheorem{lemma}{Lemma}
\newtheorem{definition}{Definition}
\theoremstyle{remark}
\newcommand*{\Scale}[2][4]{\scalebox{#1}{\ensuremath{#2}}}
\title{Localization Bounds for the Graph Translation}
\name{Benjamin Girault$^{\star}$, Paulo Gonçalves$^{\diamond}$,
Shrikanth S. Narayanan$^{\star}$, Antonio Ortega$^{\star}$\thanks{This work was supported in part 
by NSF under grants CCF-1410009, CCF-1527874, CCF-1029373.}}
\address{$^\star$ University of Southern California, Los Angeles, CA 90089, USA \\
$^{\diamond}$ Univ Lyon, Inria, ENS de Lyon, CNRS, UCB Lyon 1, 69342, Lyon, FRANCE}
\begin{document}
\setlength{\abovedisplayskip}{7pt}
\setlength{\belowdisplayskip}{7pt}
\maketitle
\begin{abstract}
  The graph translation operator has been defined with good spectral properties in mind, and in 
  particular with the end goal of being an isometric operator. Unfortunately, the resulting 
  definitions do not provide good intuitions on a vertex-domain interpretation. In this 
  paper, we show that this operator does have a vertex-domain interpretation as a diffusion 
  operator using a polynomial approximation. 
  We show that its impulse response exhibit an exponential decay of the energy way from the 
  impulse, demonstrating localization preservation. 
  Additionally, we formalize several techniques that can be used to study 
  other graph signal operators.
\end{abstract}
\begin{keywords}
graph signal processing, graph filters approximation, graph translation
\end{keywords}
\section{Introduction}
\label{sec:intro}

The field of graph signal processing aims at extending the tools of classical 
signal processing to irregular domains, and more precisely to finite and discrete irregular 
structures. In the 
past few years, the field has seen many successes, such as the graph Fourier transform 
\cite{Shuman.IEEESP.2013,Sandryhaila.TSP.2013}, shift invariant filters 
\cite{Sandryhaila.TSP.2013}, ARMA models \cite{Loukas.IEEESP.2015}, several graph wavelet 
transforms \cite{Coifman.ACHA.2006.Wave,Hammond.ACHA.2011,Leonardi.TSP.2013,Shuman.TSP.2015}, or 
vertex-frequency decompositions \cite{Shuman.ACHA.2016,Shuman.TSP.2015}. This is by no means an 
exhaustive list of GSP tools introduced so far, but an illustration to how tools involving the time 
shift operator have been extended to the graph setting. In particular, several 
operators equivalent to the time shift have been devised for graph domains. Chronologically, 
the first of these is the \textit{graph shift} \cite{Sandryhaila.TSP.2013} defined in the vertex 
domain as an operator \textit{shifting} the energy from one vertex to its neighbors and according 
to 
the edges weights. The second is the \textit{generalized translation} \cite{Shuman.ACHA.2016} 
defined as a generalized convolution by a delta signal and having the property of localizing smooth 
signals about said delta signal. The operator we are interested in is the \textit{graph 
translation} \cite{Girault.SPL.2015} defined in the Fourier domain as an isometric operator 
shifting the Fourier components by a complex phase and according to the graph frequencies. 
Finally, in \cite{Gavili.ARXIV.2015}, the authors identified the need of an isometric 
operator after \cite{Girault.SPL.2015}, but the phase shifts of the Fourier components are not 
related to the graph frequencies. Of these shift operators only the last two are isometric 
\cite{Girault.SPL.2015}, \textit{i.e.} the only ones preserving the energy of a signal, and only 
the 
graph translation has meaningful (\textit{w.r.t.} the graph) complex phase shifts.

Numerical experiments with this operator have shown a localization preservation property 
\cite{Girault.SPL.2015,Girault.THESIS.2015}, which has not been shown analytically. The 
localization property can be formalized in three different ways from the impulse response. The 
strictest is to 
verify that the impulse response is itself an impulse, as verified by the time shift operator. We 
showed in \cite{Girault.THESIS.2015} that it is impossible to have both the isometry and this 
strict property. The second definition is to have an impulse response compactly supported. This 
property is shown by the graph shift. 
We believe that this definition is also incompatible with isometry, but this has not yet been 
proven. 
We are focusing here on the third definition involving an 
exponential decay of the energy as vertices get farther away from the center of the impulse. 
The localization property of \cite{Shuman.ACHA.2016} is of this kind. We use a similar method and 
draw a general framework for showing such a property. More precisely, given an operator, we 
approximate it using a polynomial of the adjacency matrix. Such a polynomial operator has itself an 
impulse response compactly supported since the $k^\text{th}$ power of the adjacency matrix 
corresponds to the $k$-hop paths in the graph. Showing that the error made by the approximation 
decays exponentially as the order of the polynomial increases is then enough to show that the 
operator is preserving localization according to the third definition. We show that the graph 
translation verifies this third definition of localization property.

We start by setting the definitions and notations of GSP we use. Then we give several general 
results and techniques that can be used to bound the response of any operator written as a function 
of a given matrix (typically the Laplacian matrix or the adjacency matrix). Finally, we recall the 
definition of the graph translation, and give its associated localization preservation result.

\section{Graph Signal Processing}
\label{sec:background}

Let $\mathcal{G}=(V,E,w)$ be a weighted undirected graph, with $V$ the set of vertices with 
$|V|=N$, $E$ the set of edges such that if $ij$ is an edge, $ji$ is also an edge, and 
$w:E\rightarrow \mathbb{R}$ the weight of the edges. We denote $\mathbf{A}$ the adjacency matrix 
with $A_{ij}=w(i,j)$ the weight of the corresponding edge, $A_{ij}=0$ if no such edge exists. Let 
$\mathbf{D}=\diag(d_1,\dots,d_N)$ be the diagonal matrix of vertex degrees $d_i=\sum_j A_{ij}$. Let 
$\mathbf{L}=\mathbf{D}-\mathbf{A}$ be the Laplacian matrix of the graph, and 
$\mathcal{L}=\mathbf{D}^{-1/2}\mathbf{L}\mathbf{D}^{-1/2}$ the normalized Laplacian 
matrix.

The three matrices $\mathbf{A}$, $\mathbf{L}$ and $\mathcal{L}$ are real symmetric, and as such 
diagonalizable in an orthonormal basis. In particular, we have 
$\mathbf{L}=\mathbf{U}\mathbf{\Lambda} \mathbf{U}^*$ with $\mathbf{\Lambda}$ a diagonal matrix, and 
$\mathbf{U}$ the unitary\footnote{$U$ being real, it is orthogonal, but since we are dealing 
with complex operators and signal, the unitary property is more convenient.} eigenvector matrix. 
The Graph Fourier Transform (GFT) is then 
defined as the projection on those eigenvectors and denoted 
$\widehat{\mathbf{x}}=\mathbf{U}^*\mathbf{x}$. Similar approaches can be derived using $\mathbf{A}$ 
and $\mathcal{L}$. The orthonormality allows then for Parseval's identity to be verified: 
$\sum_i|x_i|^2=\|\mathbf{x}\|_2^2=\|\widehat{\mathbf{x}}\|_2^2=\sum_l|\widehat{x}(l)|^2$, where 
the signal energy in both the vertex and spectral domains is given by the $\ell_2$-norm.

Let $\rho_\mathcal{G}^2=\max_i 2d_i(d_i+\bar{d_i})$, with $\bar{d_i}=\sum_j A_{ij}d_j/d_i$. 
The reduced\footnote{In the temporal domain, reduced frequencies lie in the interval $[-1/2,1/2]$, 
with low frequencies close to zero.} graph frequencies associated to the Laplacian-based GFT are 
then $\nu_l=\frac{1}{2}\sqrt{\lambda_l/\rho_\mathcal{G}}\in[0,\frac{1}{2}]$, where $\lambda_l$ is 
the $l^\text{th}$ eigenvalue\footnote{The rationale behind the square root is the fact that the 
eigenvalues of the continuous Laplacian are squared frequencies (hence the loss of the sign), and 
the discrete graph Laplacian is similar to the continuous Laplacian \cite{Shuman.IEEESP.2013}.} of 
$\mathbf{L}$. Similarly, for the GFT based on $\mathcal{L}$, we have 
$\nu_l=\frac{1}{2}\sqrt{\mu_l/2}\in[0,\frac{1}{2}]$, and for $\mathbf{A}$, we have 
$\nu_l=1-\gamma_l/\gamma_\text{max}\in[0,2]$ (with $\mu_l$ and $\gamma_l$ the eigenvalues of 
$\mathcal{L}$ and $A$ respectively). We refer the interested reader to 
\cite{Girault.SPL.2015,Girault.THESIS.2015,Sandryhaila.TSP.2014} for justifications of these 
definitions.

Finally, we call a convolutive operator a graph filter $\mathbf{H}$ such that there exists a graph 
signal $\mathbf{h}$ verifying $\mathbf{Hx}=\mathbf{h}*\mathbf{x}$ where $*$ is the convolution 
operation defined as $\widehat{\mathbf{h}*\mathbf{x}}(l)=\widehat{h}(l)\widehat{x}(l)$. We denote 
$\widehat{\mathbf{H}}$ the dual operator in the spectral domain such that 
$\widehat{\mathbf{Hx}}=\widehat{\mathbf{H}}\widehat{\mathbf{x}}$. Note that equivalently, $\mathbf 
H$ and $\mathbf L$ (resp. $\mathcal{L}$, $\mathbf A$) are jointly diagonalizable or 
$\widehat{\mathbf H}$ is a diagonal matrix.

\section{Preliminary: Polynomial Approximation Bounds}

We state the main fundamental result we rely on:

\begin{theorem}\label{thm:conv_op_approx_bound}
  Let $f(x)$ be an analytical function and $p_K^{(f)}(x)$ a polynomial approximation of degree $K$ 
  of $f$ such that $|f(x)-p_K^{(f)}(x)|\leq\kappa_f(K)$ on $X\subseteq\mathbb{R}$. Let $\mathbf{M}$ 
  be an Hermitian\footnote{$\mathbf{M}$ and its conjugate transpose $\mathbf{M}^*$ are equal.} 
  matrix with eigenvalues in $X$. We have then:
  \[
    \left\|f(\mathbf{M})\mathbf{x}-p_K^{(f)}(\mathbf{M})\mathbf{x}\right\|_2
    \leq \kappa_f(K) \|\mathbf{x}\|_2\text{.}
  \]
\end{theorem}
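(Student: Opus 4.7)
The plan is to exploit the Hermitian structure of $\mathbf{M}$ so that applying $f$ and $p_K^{(f)}$ reduces to acting on eigenvalues. First I would diagonalize $\mathbf{M}=\mathbf{V}\boldsymbol{\Sigma}\mathbf{V}^*$ with $\mathbf{V}$ unitary and $\boldsymbol{\Sigma}=\diag(\sigma_1,\dots,\sigma_N)$, $\sigma_i\in X$. Since $p_K^{(f)}$ is a polynomial, $p_K^{(f)}(\mathbf{M})=\mathbf{V}\,p_K^{(f)}(\boldsymbol{\Sigma})\,\mathbf{V}^*$ with $p_K^{(f)}(\boldsymbol{\Sigma})=\diag(p_K^{(f)}(\sigma_i))$, and the same diagonal representation is taken as the definition of $f(\mathbf{M})$ via the functional calculus associated with Hermitian matrices (this is well-defined because the eigenvalues lie in $X$, where $f$ is analytic hence defined).

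Next I would write the error operator as
\[
  f(\mathbf{M})-p_K^{(f)}(\mathbf{M})
  =\mathbf{V}\,\boldsymbol{\Delta}\,\mathbf{V}^*,
  \qquad
  \boldsymbol{\Delta}=\diag\bigl(f(\sigma_i)-p_K^{(f)}(\sigma_i)\bigr)_{i=1,\dots,N}.
\]
This difference is itself Hermitian (indeed, it is diagonalized by $\mathbf{V}$ with real entries in its diagonal form, so its operator norm equals its largest singular value, which equals $\max_i|f(\sigma_i)-p_K^{(f)}(\sigma_i)|$). By the hypothesis $|f(x)-p_K^{(f)}(x)|\le\kappa_f(K)$ for all $x\in X$, and since $\sigma_i\in X$ for every $i$, this maximum is bounded by $\kappa_f(K)$.

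Finally, since $\mathbf{V}$ is unitary and therefore an isometry on $\mathbb{R}^N$ (or $\mathbb{C}^N$), the $\ell_2$ norm is preserved under $\mathbf{V}$ and $\mathbf{V}^*$, so
\[
  \bigl\|[f(\mathbf{M})-p_K^{(f)}(\mathbf{M})]\mathbf{x}\bigr\|_2
  =\|\boldsymbol{\Delta}\mathbf{V}^*\mathbf{x}\|_2
  \le\max_i|f(\sigma_i)-p_K^{(f)}(\sigma_i)|\cdot\|\mathbf{V}^*\mathbf{x}\|_2
  \le\kappa_f(K)\,\|\mathbf{x}\|_2,
\]
which is the claimed bound.

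I do not expect any genuine obstacle in this proof: the only subtle point worth spelling out is that $f(\mathbf{M})$ must be interpreted through the Hermitian functional calculus (eigenvalue-wise), so that the identity $f(\mathbf{M})-p_K^{(f)}(\mathbf{M})=\mathbf{V}\boldsymbol{\Delta}\mathbf{V}^*$ makes sense and allows the uniform scalar bound on $X$ to be lifted directly to an operator-norm bound. Everything else is a routine consequence of unitary invariance of the Euclidean norm.
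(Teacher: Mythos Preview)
Your proof is correct and follows essentially the same route as the paper: diagonalize $\mathbf{M}$ by a unitary, reduce the error to an eigenvalue-wise bound, and use unitary invariance of the $\ell_2$-norm. One harmless slip: the difference $f(\mathbf{M})-p_K^{(f)}(\mathbf{M})$ need not be Hermitian when $f$ is complex-valued (as in the paper's actual applications), but this does not affect your argument since the final inequality only uses that $\boldsymbol{\Delta}$ is diagonal, for which $\|\boldsymbol{\Delta}\mathbf{y}\|_2\le\max_i|\Delta_{ii}|\,\|\mathbf{y}\|_2$ holds regardless.
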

\begin{proof}
  First of all, $\mathbf M$ being Hermitian, it can be diagonalizable into $\mathbf M=\mathbf 
  V\mathbf\Theta \mathbf V^*$, with $\mathbf\Theta$ diagonal and $\mathbf V$ unitary. It follows 
  that $f(\mathbf M)=\mathbf Vf(\mathbf\Theta)\mathbf V^*$ and $p_K(\mathbf 
  M)=\mathbf Vp_K(\mathbf\Theta)\mathbf V^*$. We 
  have then:
  \begin{align*}
    \left\|f(\mathbf{M})\mathbf{x}-p_K(\mathbf{M})\mathbf{x}\right\|_2^2
    &=\left\|\left[f(\mathbf\Theta)-p_K(\mathbf\Theta)\right]\mathbf V^*\mathbf{x}\right\|_2^2 \\
    &=\sum_l |f(\theta_l)-p_K(\theta_l)|^2|\mathbf{\widehat x}(l)|^2 \nonumber \\
    &\leq \kappa_f(K)^2\sum_l |\mathbf{\widehat x}(l)|^2 \\
    &= \kappa_f(K)^2\|\mathbf{\widehat x}\|_2^2=\kappa_f(K)^2\|\mathbf{x}\|_2^2\text{.}
  \end{align*}
  where the first and last equality follow from $\mathbf V$ being unitary, and the inequality 
  follows from $\theta_l\in X$.
\end{proof}

We now state three lemmas giving the bound $\kappa_f(K)$.

\begin{lemma}\label{lem:analyt_fun}
  Let $f$ be an analytical function and $p_K$ its polynomial approximation verifying:
  \begin{align*}
    f(x)   &= \sum_{k=0}^\infty f_k\;(x-a)^k &
    p_K(x) &= \sum_{k=0}^K      f_k\;(x-a)^k\text{,}
  \end{align*}
  such that $f$ is well defined\footnote{\textit{i.e.} the sum converges.} on 
  the convex set $X\subseteq\mathbb{R}$. We have then:
  \begin{equation}\label{eq:kapp_taylor}
    \kappa_f(K)=
    \frac{1}{(K+1)!}
    \max_{X} \left\{|f^{(K+1)}|\right\}
    \max_{x\in X}  \left\{|x-a|^{K+1}\right\}
    \text{.}
  \end{equation}
\end{lemma}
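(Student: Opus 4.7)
The plan is to invoke Taylor's theorem with the Lagrange form of the remainder, which is the standard tool whenever one has an analytic function and its truncated Taylor series. The hypotheses of the lemma give exactly what is needed: $f$ is analytic (hence smooth), $p_K$ is the order-$K$ Taylor polynomial expanded about $a$, and $X$ is a convex subset of $\mathbb{R}$. Convexity is the crucial structural assumption because it guarantees that for every $x \in X$, the entire segment $[a,x]$ (or $[x,a]$) lies inside $X$, so the mean-value-type argument underlying the Lagrange remainder can be applied without leaving the domain on which $f$ is well defined.

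First I would write out Taylor's theorem: for every $x \in X$, there exists some $\xi$ strictly between $a$ and $x$ such that
\[
  f(x) - p_K(x) = \frac{f^{(K+1)}(\xi)}{(K+1)!}\,(x-a)^{K+1}.
\]
This point $\xi$ lies in $X$ precisely because of the convexity assumption, so $|f^{(K+1)}(\xi)|$ is controlled by $\max_X|f^{(K+1)}|$. Taking absolute values and bounding each factor separately gives
\[
  |f(x) - p_K(x)| \leq \frac{1}{(K+1)!}\,|f^{(K+1)}(\xi)|\,|x-a|^{K+1}
  \leq \frac{1}{(K+1)!}\,\max_X|f^{(K+1)}|\cdot\max_{x\in X}|x-a|^{K+1},
\]
which is exactly the claimed bound for $\kappa_f(K)$.

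There is no real obstacle here: the lemma is essentially a restatement of the Lagrange remainder formula tailored to the notation of the paper. The only subtlety worth flagging explicitly in the write-up is that convexity of $X$ is what allows one to guarantee $\xi \in X$, so that the supremum of $|f^{(K+1)}|$ can legitimately be taken over $X$ rather than over some a priori unknown interval. If one wanted to avoid invoking Taylor's theorem as a black box, the same bound follows from the integral form of the remainder, $f(x) - p_K(x) = \frac{1}{K!}\int_a^x (x-t)^K f^{(K+1)}(t)\,dt$, followed by the triangle inequality and the elementary identity $\int_a^x (x-t)^K dt = (x-a)^{K+1}/(K+1)$; I would mention this as an alternative but default to the Lagrange remainder for brevity.
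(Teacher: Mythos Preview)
Your proposal is correct and essentially identical to the paper's own proof: both invoke Taylor's theorem with the Lagrange form of the remainder, obtain $f(x)-p_K(x)=\frac{f^{(K+1)}(\xi)}{(K+1)!}(x-a)^{K+1}$ for some $\xi$ between $a$ and $x$, and then maximize over $X$. The only difference is cosmetic: you make the role of convexity explicit and mention the integral-remainder alternative, whereas the paper simply states the Lagrange remainder and maximizes.
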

\begin{proof}
  We first remark that using the Taylor series decomposition about $a$, we have 
  $f_k=f^{(k)}(a)/k!$. Using Taylor's theorem, for all $x\in X$, there exits 
  $\zeta\in[a,x]$ (or $\zeta\in[x,a]$ if $x<a$) such that:
  \[
    f(x)-p_K(x)=\frac{f^{(K+1)}(\zeta)}{(K+1)!}(x-a)^{K+1}\text{.}
  \]
  Maximizing the absolute value of the quantity above on the set $X$ yields \eqref{eq:kapp_taylor}.
\end{proof}

\begin{lemma}\label{lem:alternating_sum}
  Let $f$ be an analytical function verifying:
  \begin{align*}
    f(x)   &= \sum_{k=0}^\infty (-1)^k f_k\;(x-a)^k 
  \end{align*}
  with $f_k(x-a)^k$ of constant sign for any $x\in X\subseteq\mathbb{R}$. We have then:
  \begin{equation}
    \kappa_f(K)=\left|f_{K+1}(x-a)^{K+1}\right|
    \text{.}
  \end{equation}
\end{lemma}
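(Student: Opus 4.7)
The plan is to recognize this bound as the classical Leibniz remainder estimate for alternating series, applied termwise at each $x \in X$. First I would write the approximation error as the tail of the series:
\[
  f(x) - p_K(x) = \sum_{k=K+1}^\infty (-1)^k f_k\,(x-a)^k.
\]
The hypothesis that $f_k(x-a)^k$ has constant sign in $k$ (for any fixed $x \in X$), combined with the $(-1)^k$ factor, means the terms of this tail series strictly alternate in sign. The whole argument then reduces to invoking the alternating series estimation theorem at each $x$.

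Next I would verify (or assume as part of the hypothesis, since convergence of $f(x)$ on $X$ is already given) that $|f_k(x-a)^k|$ decreases monotonically to zero. Under the constant-sign hypothesis, successive partial sums of $f(x)$ oscillate around the true value and the distance from any partial sum to $f(x)$ is at most the absolute value of the first omitted term. Concretely, pairing consecutive terms in the tail shows $f(x) - p_K(x)$ lies between $0$ and $(-1)^{K+1} f_{K+1}(x-a)^{K+1}$, yielding
\[
  |f(x) - p_K(x)| \leq |f_{K+1}(x-a)^{K+1}|.
\]
Taking the supremum over $x \in X$ (or, as stated, leaving the bound as a function of $x$ to be maximized later by the caller of the lemma) gives the announced $\kappa_f(K)$.

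The main obstacle I foresee is the monotonic-decrease hypothesis of Leibniz's theorem, which the lemma does not state explicitly. In the intended application (the graph translation, whose Fourier multiplier is an exponential expanded as an alternating series) this is immediate because the ratio $|f_{k+1}(x-a)^{k+1}|/|f_k(x-a)^k|$ is eventually less than $1$ on the relevant spectral set. In the general statement I would either add monotonicity as an implicit assumption or, more cleanly, observe that the bound $|f(x)-p_K(x)| \leq |f_{K+1}(x-a)^{K+1}|$ already follows from the alternating sign pattern combined with convergence of the series, by the standard ``pair the tail'' argument. Once that is settled, plugging this $\kappa_f(K)$ into \autoref{thm:conv_op_approx_bound} delivers the operator-level bound needed in the sequel.
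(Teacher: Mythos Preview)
Your approach is correct and matches the paper's own proof, which is a single line: ``The sum in $f$ is an alternating series.'' You have in fact been more careful than the paper, since you correctly flag that the monotone-decrease hypothesis of the Leibniz estimate is used without being stated.
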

\begin{proof}
  The sum in $f$ is an alternating series.
\end{proof}

Note that we need $X\subseteq[a,\infty)$ or $X\subseteq(\infty,a]$ for 
\autoref{lem:alternating_sum}.

\begin{lemma}\label{lem:graph_op_comp}
  Let $f(x)=g(x)h(x)$ be the product of two analytical functions well defined on 
  $X\subseteq\mathbb{R}$, and $p^{(f)}_{P,Q}(x)=p^{(g)}_Q(x)p^{(h)}_P(x)$ be its polynomial 
  approximation. Let $\kappa_g(Q)$ and $\kappa_h(P)$ be the associated upper bounds. We have then:
  \[
    \kappa_f(P,Q)=\kappa_g(Q) \max_X|h|+\kappa_h(P) (\max_X|g|+\kappa_g(Q))\text{.}
  \]
\end{lemma}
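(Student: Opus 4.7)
The plan is to bound $|f(x)-p^{(f)}_{P,Q}(x)|$ pointwise on $X$ and then take the maximum, since this scalar bound $\kappa_f(P,Q)$ is exactly what the lemma asserts (and is what later gets fed into \autoref{thm:conv_op_approx_bound} through operator evaluation). The natural move is the standard ``add and subtract a cross term'' trick: write
\[
  g(x)h(x)-p^{(g)}_Q(x)p^{(h)}_P(x)
  = \bigl[g(x)-p^{(g)}_Q(x)\bigr]h(x) + p^{(g)}_Q(x)\bigl[h(x)-p^{(h)}_P(x)\bigr],
\]
and apply the triangle inequality. The first summand is bounded in absolute value by $\kappa_g(Q)\max_X|h|$ directly from the hypothesis $|g-p^{(g)}_Q|\leq\kappa_g(Q)$, and the second by $|p^{(g)}_Q(x)|\,\kappa_h(P)$.

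The only step requiring a little care is controlling $|p^{(g)}_Q(x)|$ on $X$. For this I would use the reverse triangle inequality: since $|g(x)-p^{(g)}_Q(x)|\leq\kappa_g(Q)$, we get
\[
  |p^{(g)}_Q(x)| \leq |g(x)| + \kappa_g(Q) \leq \max_X|g| + \kappa_g(Q).
\]
Substituting this bound back yields exactly
\[
  |f(x)-p^{(f)}_{P,Q}(x)| \leq \kappa_g(Q)\max_X|h| + \kappa_h(P)\bigl(\max_X|g|+\kappa_g(Q)\bigr),
\]
which is the claimed expression for $\kappa_f(P,Q)$.

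There is no real obstacle here; the argument is purely algebraic manipulation plus two applications of the triangle inequality. The mild asymmetry in the final formula (the $\kappa_g(Q)$ correction appears inside the factor multiplying $\kappa_h(P)$ but not the factor multiplying $\kappa_g(Q)$) is just an artifact of the specific splitting chosen: pulling $h$ out of the first piece lets us use the exact $\max_X|h|$, while pulling $p^{(g)}_Q$ out of the second piece forces the slightly looser bound $\max_X|g|+\kappa_g(Q)$. One could symmetrize by splitting the other way, but the stated form is the one obtained from this particular (and arguably most natural) decomposition.
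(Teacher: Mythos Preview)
Your proof is correct and follows essentially the same approach as the paper: the same add-and-subtract splitting $gh-p^{(g)}_Q p^{(h)}_P=(g-p^{(g)}_Q)h+p^{(g)}_Q(h-p^{(h)}_P)$, triangle inequality, and the bound $|p^{(g)}_Q|\leq|g|+|g-p^{(g)}_Q|\leq\max_X|g|+\kappa_g(Q)$. Your additional remark on the asymmetry of the final formula is a helpful observation not present in the paper.
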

\begin{proof}
  We split the difference in two parts (we removed the arguments of the functions for conciseness):
  \begin{align*}
    \left|f-p^{(f)}\right|
    &\leq \left|gh-p^{(g)}_Qh\right| + \left|p^{(g)}_Qh-p^{(g)}_Qp^{(h)}_P\right| \\
    &= \left|g-p^{(g)}_Q\right||h|+\left|p^{(g)}_Q\right|\left|h-p^{(h)}_P\right| \\
    &\leq \kappa_g(Q)\max_X|h|+(\max_X |g|+\kappa_g(Q))\kappa_h(P)\text{,}
  \end{align*}
  where we used the property $|p^{(g)}_Q|\leq|g|+|g-p^{(g)}_Q|$ in the second inequality.
\end{proof}

We will be using \autoref{thm:conv_op_approx_bound} with $\mathbf M=\mathbf L/\rho_\mathcal{G}$, 
$\mathbf M=\mathcal{L}/2$ and $\mathbf M=\mathbf I-\mathbf A/\gamma_\text{max}$ in 
\autoref{sec:localization}.

\section{Graph Translation}

The goal of this paper is to show that the graph translation of \cite{Girault.SPL.2015} 
preserves localization in the third sense: The impulse response has a decaying energy from the 
impulse vertex to its neighbors and beyond. 
Given a GFT and a set of graph frequencies $\{\nu_l\}_l$, we define the \textit{Graph Translation} 
operator as the isometric convolutive operator verifying $\widehat{\mathbf T}=\exp\left(-\imath 
2\pi\diag(\nu_0,\cdots,\nu_{N-1})\right)$ \cite{Girault.THESIS.2015}. Isometry is understood here 
as the property $\|\mathbf Tx\|_2=\|x\|_2$. This leads to the following 
equivalent definitions:

\begin{definition}[Graph Translation with $L$]\label{def:trans_l}
  \begin{equation}
    \mathbf T_\mathcal{G} = \exp\left(-\imath \pi 
    \sqrt{\frac{\mathbf L}{\rho_\mathcal{G}}}\right)\text{.}
  \end{equation}
\end{definition}

Note that $\mathbf T_\mathcal{G}$ is not just a normalized version of $\mathbf L$, but a completely 
new operator built from $\mathbf L$. A similar operator can be defined using the normalized 
Laplacian matrix:

\begin{definition}[Graph Translation with $\mathcal{L}$]\label{def:trans_l_norm}
  \begin{equation}
    \mathcal{T}_\mathcal{G} = \exp\left(-\imath \pi \sqrt{\frac{\mathcal{L}}{2}}\right)\text{.}
  \end{equation}
\end{definition}

We refer the interested reader to \cite{Girault.THESIS.2015} for illustrations on the comparison 
between these two operators. 
A similar definition can also be devised with the adjacency matrix-based GSP framework. We denote 
it $\mathcal{A}_\mathcal{G}$ to stress the fact that this is more than just a normalization of the 
adjacency matrix.

\begin{definition}[Graph Translation with $A$]\label{def:trans_a}
  \begin{equation}
    \mathcal{A}_\mathcal{G}
    = \exp\left(-\imath \pi \left(\mathbf I - \frac{\mathbf A}{\gamma_{\text{max}}}\right)\right)
    \text{.}
  \end{equation}
\end{definition}

\autoref{def:trans_a} rescales first the adjacency matrix to have eigenvalues of $\mathbf I - 
\frac{\mathbf A}{\gamma_{\text{max}}}$ within $[0,2]$, \textit{i.e.} the associated graph 
frequencies lie within $[0,2\pi]$. The issue of the graph frequency sign does not appear here since 
the eigenvalues of the adjacency matrix are considered as linear representations of the graph 
frequencies according to \cite{Sandryhaila.TSP.2014}.

Note that the operator $\mathcal{A}_\mathcal{G}$ is not guaranteed to be isometric when the graph 
is directed since the eigenvalues of $\mathbf A$ can be complex: if $z=x+\imath y$ is a complex 
eigenvalue of $\mathbf A/\gamma_\text{max}$, we have $\exp(-\imath\pi(1-z))=e^{-\pi 
y}\exp(-\imath\pi(1-x))$ as an eigenvalue of $\mathcal{A}_\mathcal{G}$, and $y$ being non-zero, the 
operator is not isometric. As specified in \autoref{sec:background}, we focus on undirected 
graphs which do not present such a difficulty.

\section{Graph Translation Localization Property}
\label{sec:localization}

As shown in \cite{Girault.THESIS.2015}, the graph translation does not preserve the strictest 
localization preservation property in general. The main contribution of this paper is then to show 
that all three graph translations still have localization bounds in the vertex domain. 
To that end, we give polynomial approximations of these operators that verify an exponential decay 
of the error as the degree of the polynomial increases. As shown in the introduction, this 
guarantees a polynomial decay of the impulse response from the impulse vertex to its neighbors.

Since the square root in \autoref{def:trans_l} and \ref{def:trans_l_norm} introduces an additional 
degree of complexity, we begin with $\mathcal{A}_\mathcal{G}$. All bounds are shown for the graph 
translation to the power $\alpha$, where $\alpha$ plays the role of a \textit{vertex-diffusion} 
factor. The error bounds we obtain show that localization decreases as the vertex-diffusion 
increases.

\subsection{Adjacency-based Graph Translation}

\begin{figure}[t]
  \centering
  \includegraphics{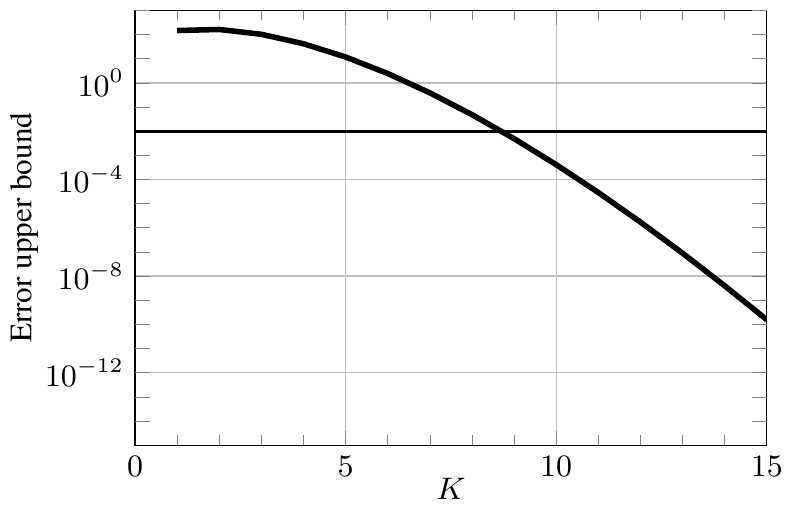}
  \vspace{-0.2cm}
  \caption{Approximation curve associated to \eqref{eq:ag_bound} with $\alpha=1$.}
  \label{fig:approx_ag}
\end{figure}

In this section, we have $\mathbf M=\mathbf I-\mathbf A/\gamma_\text{max}$, such that:
\[
  \mathcal{A}_\mathcal{G}^\alpha=\cos\left(\alpha\pi \mathbf M\right)
  -\imath\sin\left(\alpha\pi \mathbf M\right)\text{.}
\]
We consider then the following analytic formulas on $\mathbb{R}_+$:
\begin{align*}
  \cos(\alpha\pi x) &= \sum_{k=0}^{\infty} (-1)^k\frac{(\alpha\pi)^{2k}}{(2k)!}x^{2k} \\
  \sin(\alpha\pi x) &= \sum_{k=0}^{\infty} (-1)^k\frac{(\alpha\pi)^{2k+1}}{(2k+1)!}x^{2k+1}\text{,}
\end{align*}
and the following truncated sum approximation of $\mathcal{A}_\mathcal{G}^\alpha$:
\begin{equation}
  \mathcal{A}_{\mathcal{G},K}^\alpha=
  \sum_{k=0}^{K} (-1)^k\left[
    \frac{(\alpha\pi\mathbf M)^{2k}}{(2k)!}
    -\imath
    \frac{(\alpha\pi\mathbf M)^{2k+1}}{(2k+1)!}
  \right]\text{.}
\end{equation}
Using \autoref{thm:conv_op_approx_bound} and \autoref{lem:alternating_sum} with $a=0$ and 
$X=[0,1]$, we have then:
\begin{multline}
\label{eq:ag_bound}
  \frac{\left\|\left[
    \mathcal{A}_\mathcal{G}^\alpha-\mathcal{A}_{\mathcal{G},K}^\alpha
  \right]\mathbf x\right\|_2}{\|\mathbf x\|_2}
  \leq \frac{(\alpha2\pi)^{2K+2}}{(2K+2)!}
  \Scale[1.2]{\left(1+\frac{\alpha2\pi}{2K+3}\right)}
\end{multline}

\subsection{Laplacian-based Graph Translation}

\setlength{\abovedisplayskip}{6pt}
\setlength{\belowdisplayskip}{6pt}

\begin{figure}[t]
  \centering
  \includegraphics{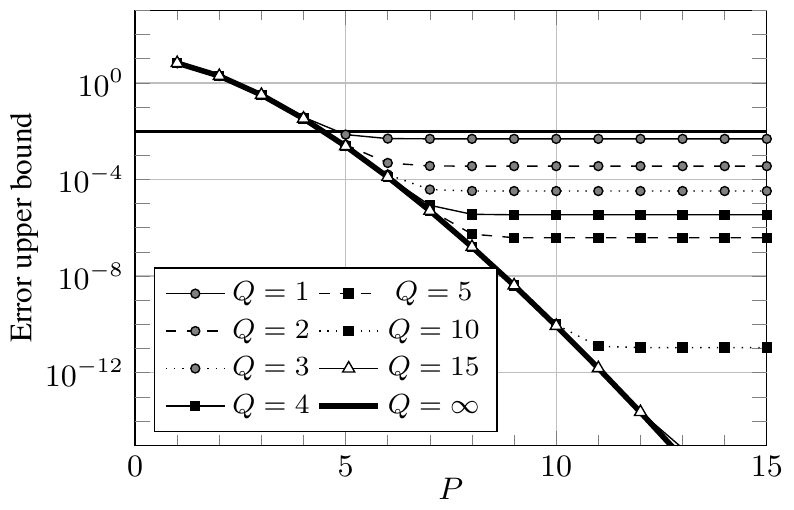}
  \vspace{-0.2cm}
  \caption{Approximation curves associated to \eqref{eq:tg_bound} for different values of $Q$ 
    with $\alpha=1$ and $\varrho=0.1$}
  \label{fig:approx_tg}
\end{figure}

We now have $\mathbf M=\mathbf L/\rho_\mathcal{G}$, such that:
\begin{equation}\label{eq:gr_trans_cos_sin}
  \mathbf T_\mathcal{G}^\alpha=
           \cos(\alpha\pi\sqrt{\mathbf M}) 
    -\imath\sqrt{\mathbf M}\frac{\sin(\alpha\pi\sqrt{\mathbf M})}{\sqrt{\mathbf M}}
\end{equation}
Let $\varrho=\lambda_1/\rho_\mathcal{G}$ be the spectral gap of $\mathbf M$ and 
$\epsilon=(1-\varrho)/(1+\varrho)<1$. Note that the case $\epsilon=1$ corresponds to $\varrho=0$, 
\textit{i.e.} the graph is disconnected. However graphs are always assumed connected for GSP since 
we can perform GSP separately on each connected component for the same results. We consider then 
the following sums:
\def\scaleRfactor{\Scale[1.1]{\frac{1}{1+\epsilon}}}
\begin{align*}
  C^{(K)}(x) &= \sum_{k=0}^{K} (-1)^k\frac{(\alpha\pi)^{2k}}{(2k)!}x^k \\
  S^{(K)}(x) &= \sum_{k=0}^{K} (-1)^k\frac{(\alpha\pi)^{2k+1}}{(2k+1)!}x^k \\
  R^{(K)}(x) &= \sqrt{\scaleRfactor}\sum_{k=0}^{K} 
    \frac{(-1)^k(2k)!}{(1-2k)(k!)^24^k}\bigl((1+\epsilon)x-1\bigr)^k\text{,}
\end{align*}
such that $\cos(\alpha\pi\sqrt{x})=C^{(\infty)}(x)$, 
$\sin(\alpha\pi\sqrt{x})/\sqrt{x}=S^{(\infty)}(x)$ for $x\in\mathbb{R}_+$, and 
$\sqrt{x}=R^{(\infty)}(x)$ for $x\in[\varrho,1]$. This last sum follows from the Taylor expansion 
of $\sqrt{1+y}$ about $y=0$ for $y\in[-\epsilon,\epsilon]$ and $y=(1+\epsilon) x-1$. We use the 
following approximation sum of the graph translation:
\begin{equation}
  T_{\mathcal{G},P,Q}^\alpha = C^{(P)}(\mathbf M) - \imath R^{(Q)}(\mathbf M) 
    S^{(P)}(\mathbf M)\text{,}
\end{equation}
Where $P$ may be different than $Q$ to account for the slower convergence speed of the sum in 
$R^{(K)}$. Indeed, Stirling's formula for the factorial shows that the leading fraction in the sum 
in $R^{(K)}$ is equivalent to $(-1)^k(\sqrt{\pi k}(1-2k))^{-1}$, such that the convergence speed of 
$S^{(K)}$ and $C^{(K)}$ are much faster than that of $R^{(K)}$. We have then the following bounds:
\begin{align*}
  \kappa_C(P) &= \frac{(\alpha\pi)^{2P+2}}{(2P+2)!} &
  \kappa_S(P) &= \frac{(\alpha\pi)^{2P+3}}{(2P+3)!}
\end{align*}
\def\scalerfactor{\Scale[1]{\frac{1}{1-\epsilon^2}}}
\begin{equation*}
  \kappa_R(Q-1) = \sqrt{\scalerfactor}
      \frac{(2Q)!}{(2Q-1)(Q!)^24^Q} 
      \left(\epsilon(1-\epsilon)\right)^Q\text{,}
\end{equation*}
using \autoref{lem:alternating_sum} on $[0,1]$ for $C$ and $S$, and \autoref{lem:analyt_fun} on 
$[\varrho,1]$ for $R$ with $f(y)=\sqrt{1+y}$ on $[-\epsilon,\epsilon]$. Notice also that 
$S^{(P)}(0)=0=S(0)$ for all $P$, such that:
\begin{align*}
  R^{(Q)}(\mathbf M)&S^{(P)}(\mathbf M)\mathbf x \\
  &= \sum_{l=0}^{N-1}
    R^{(Q)}\left(\frac{\lambda_l}{\rho_\mathcal{G}}\right)
    S^{(P)}\left(\frac{\lambda_l}{\rho_\mathcal{G}}\right)
    \widehat{x}(l) \mathbf u^{(l)} \\
  &= \sum_{l=1}^{N-1}
    R^{(Q)}\left(\frac{\lambda_l}{\rho_\mathcal{G}}\right)
    S^{(P)}\left(\frac{\lambda_l}{\rho_\mathcal{G}}\right)
    \widehat{x}(l) \mathbf u^{(l)}\text{,}
\end{align*}
for all $P$ and $Q$, with $\mathbf u^{(l)}$ the $l^\text{th}$ Fourier mode. It is then enough to 
have $\kappa_R$ and $\kappa_{RS}$ for the interval $[\lambda_1/\rho_\mathcal{G},1]=[\varrho,1]$ to 
obtain the result of \autoref{thm:conv_op_approx_bound}. This also shows:
\[
  \mathbf T_\mathcal{G}=\mathbf T_{\mathcal{G},\infty,\infty}\text{.}
\]
Using \autoref{lem:graph_op_comp}, we obtain on $[\varrho,1]$:
\[
  \kappa_{RS}(P,Q)=\kappa_R(Q)+\kappa_S(P)\left(1+\kappa_R(Q)\right).
\]
and finally:
\begin{multline}
\label{eq:tg_bound}
  \left\|\left[\mathbf T_\mathcal{G}^\alpha- \mathbf T_{\mathcal{G},P,Q}^\alpha\right]\mathbf 
    x\right\|_2
  \\
  \leq 
  \left[
  \kappa_C(P)+\kappa_S(P)
  +\kappa_R(Q)(1+2\kappa_S(P))
  \right]\left\|\mathbf x\right\|_2\text{.}
\end{multline}

Overall, this bound is dominated by the term $\kappa_R(Q)$ since $\kappa_C(P)$ and $\kappa_S(P)$ 
decrease very quickly with $P$ due to the factorial. \autoref{fig:approx_tg} shows the evolution of 
$\|[\mathbf T_\mathcal{G}-\mathbf T_{\mathcal{G},P,Q}]\mathbf x\|_2/\|\mathbf x\|_2$ according to 
$P$ and $Q$. We see that the error is mainly explained by $Q$: Larger values of $P$ do not 
decrease the error, with a plateau on the error explained by $Q$. The approximation error is 
about 1\% when $P=5$ and $Q=1$, and each increment of one of $Q$ leads to a an approximation error 
divided by about $1/\epsilon$ (for a large $P$). With these values of $P$ and $Q$, we have 
approximated $\mathbf T_\mathcal{G}$ with a local operator acting in a $6$-hops neighborhood of a 
given vertex.

Note that the faster convergence compared to \autoref{fig:approx_ag} is explained by two factors. 
First, the graph frequencies of $\mathbf L$ lie in $[0,1/2]$ while those of $\mathbf A$ lie in 
$[0,1]$ such that the error curve is slightly shifted towards bigger values of $K$ on 
\autoref{fig:approx_ag}. Second, the use of the square root of $\mathbf L$ for $\mathbf 
L_\mathcal{G}$ compared to the plain matrix $\mathbf A$ for $\mathcal{A}_\mathcal{G}$ leads to a 
steeper slope on \autoref{fig:approx_tg} (for $Q=\infty$), and a better approximation.

\begin{figure}[t]
  \centering
  \includegraphics{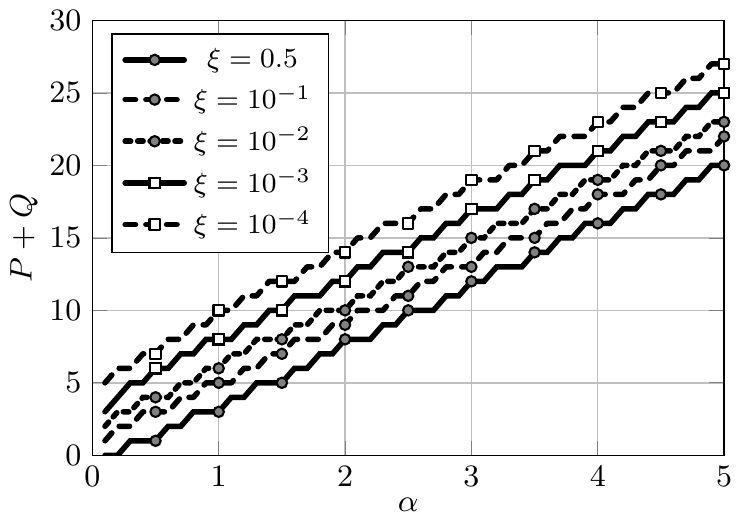}
  \vspace{-0.3cm}
  \caption{Minimal value of $P+Q$ to have a maximum error of $\xi$ in \eqref{eq:tg_bound} for 
  $\varrho=0.1$.}
  \label{fig:optim_approx}
  \vspace{-0.5cm}
\end{figure}

\autoref{fig:optim_approx} shows the minimal value of $P+Q$ such that the approximation in 
\eqref{eq:tg_bound} yields an error smaller than $\xi$ from $0.5$ to $10^{-4}$ and 
for different values of the vertex-diffusion factor $\alpha$. We observe that the more the graph 
translation is applied, the more diffused the signal can be with a looser bound. For $\alpha=1$, we 
obtain the result that 50\% of the energy is within a 3-hops radius, 90\% of the energy is within a 
5-hops radius and 99\% is within a 6-hops radius. Approximating the graph translation through a 
polynomial operator is therefore a trade-off between accuracy and the size of the neighborhood. 
Also, the slopes on \autoref{fig:optim_approx} are steep and the localization decreases much faster 
as $\alpha$ increases compared to the graph shift operator (having a unitary slope), illustrating 
another trade-off between localization and isometry.

Finally, note that this is an upper bound, and the output may remain highly localized in the vertex 
domain no matter the value of $\alpha$. This is especially the case if one Fourier mode is highly 
localized: If $\mathbf u^{(l)}\approx \mathbf \delta_i$, then $\mathbf T_\mathcal{G}^\alpha\delta_i 
\approx \mathbf T_\mathcal{G}^\alpha u^{(l)}=e^{-\imath\alpha\nu_l}\mathbf u^{(l)}\approx 
e^{-\imath\alpha\nu_l}\delta_i$, \textit{i.e.} $\mathbf T_\mathcal{G}^\alpha\delta_i$ is localized 
about vertex $i$. Our bound corresponds then to the worst case scenario of delocalized Fourier 
modes.

\subsection{Normalized-Laplacian-based Graph Translation}

The technique is the same as before using $\mathbf{M}=\mathcal{L}/2$ instead of $\mathbf{M}=\mathbf 
L/\rho_\mathcal{G}$, and leading to the exact same bound with $\varrho=\mu_1/2$:
\begin{multline}
\label{eq:tg_norm_bound}
  \left\|\left[\mathcal T_\mathcal{G}^\alpha- \mathcal T_{\mathcal{G},P,Q}^\alpha\right]\mathbf 
    x\right\|_2
  \\
  \leq 
  \left[
  \kappa_C(P)+\kappa_S(P)
  +\kappa_R(Q)(1+2\kappa_S(P))
  \right]\left\|\mathbf x\right\|_2\text{.}
\end{multline}

\section{Conclusion}

We have shown in this paper that the graph translation defined in \cite{Girault.SPL.2015} indeed 
preserves the localization of a graph signal in the sense of an exponential decay, as suggested by 
the numerical experiments. Equivalently, we showed that this operator is acting as a diffusion 
operator in the sense that the energy spreads from one vertex to its neighbors. Also, when 
iterated, the bounds are less and less tight such that the energy may spread across farther and 
farther vertices of the graph just like a diffusion process would. Note that the bounds we showed 
only depend on the spectral gap of the Laplacian, and as such are very general bounds. We refer the 
interested reader to \cite{Abreu.LAA.2007} for a review on the spectral gap of the Laplacian 
(called the algebraic connectivity of the graph). In addition, we showed several techniques 
applicable to other operators that can be used to give them polynomial approximation and to bound 
the error and the spread of the impulse response.

One last noticeable remark is that the impulse response of the graph translation can be seen as a 
generalized translation of a particular graph signal: $T_\mathcal{G}\delta_i=h*\delta_i$ where 
$\widehat{h}(l)= \exp(-\imath 2\pi\nu_l)$. This property allows to further study the impulse 
response of the graph translation using the results dedicated to the generalized translation, and 
will be used in a future paper.

\bibliographystyle{IEEEbib}
\bibliography{bibliography}

\end{document}